\newcommand{\cmark}{\ding{51}}  
\newcommand{\xmark}{\ding{55}}  
\newtheorem{theorem}{Theorem}[section]
\newtheorem{lemma}[theorem]{Lemma}
\theoremstyle{definition}
\newtheorem{definition}{Definition}
\newcommand{\name}{\textit{PrivyWave}}
\DeclareSymbolFont{extraup}{U}{zavm}{m}{n}
\DeclareMathSymbol{\varheart}{\mathalpha}{extraup}{86}
\DeclareMathSymbol{\vardiamond}{\mathalpha}{extraup}{87}
\begin{document}

\title{\name{}: Privacy-Aware Wireless Sensing of Heart Beat}

\author{Yixuan Gao}
\authornote{Both authors contributed equally to this research.}
\email{yixuan@cs.cornell.edu}
\orcid{TODO}
\affiliation{%
  \institution{Cornell Tech}
  \city{New York}
  \state{New York}
  \country{USA}
}

\author{Tanvir Ahmed}
\authornotemark[1]
\email{tanvir@infosci.cornell.edu}
\orcid{0000-0002-9468-5033}
\affiliation{%
  \institution{Cornell Tech}
  \city{New York}
  \state{New York}
  \country{USA}
}

\author{Zekun Chang}
\email{zekunchang@infosci.cornell.edu}
\orcid{TODO}
\affiliation{%
  \institution{Cornell Tech}
  \city{New York}
  \state{New York}
  \country{USA}
}

\author{Thijs Roumen}
\email{thijs.roumen@cornell.edu}
\orcid{TODO}
\affiliation{%
  \institution{Cornell Tech}
  \city{New York}
  \state{New York}
  \country{USA}
}

\author{Rajalakshmi Nandakumar}
\email{rajalakshmi.nandakumar@cornell.edu}
\orcid{TODO}
\affiliation{%
  \institution{Cornell Tech}
  \city{New York}
  \state{New York}
  \country{USA}
}

\renewcommand{\shortauthors}{Gao, et al.}

\begin{abstract}
Wireless sensing technologies can now detect heartbeats using radio frequency and acoustic signals, raising significant privacy concerns. Existing privacy solutions either protect from all sensing systems indiscriminately preventing any utility or operate post-data collection, failing to enable selective access where authorized devices can monitor while unauthorized ones cannot. We present a key-based physical obfuscation system, PrivyWave, that addresses this challenge by generating controlled decoy heartbeat signals at cryptographically-determined frequencies. Unauthorized sensors receive a mixture of real and decoy signals that are indistinguishable without the secret key, while authorized sensors use the key to filter out decoys and recover accurate measurements. Our evaluation with 13 participants demonstrates effective protection across both sensing modalities: for mmWave radar, unauthorized sensors show 21.3 BPM mean absolute error while authorized sensors maintain a much smaller 5.8 BPM; for acoustic sensing, unauthorized error increases to 42.0 BPM while authorized sensors achieve 9.7 BPM. The system operates across multiple sensing modalities without per-modality customization and provides cryptographic obfuscation guarantees. Performance benchmarks show robust protection across different distances (30-150 cm), orientations (120° field of view), and diverse indoor environments, establishing physical-layer obfuscation as a viable approach for selective privacy in pervasive health monitoring.

\end{abstract}

\begin{CCSXML}
<ccs2012>
   <concept>
       <concept_id>10002978.10003029.10011150</concept_id>
       <concept_desc>Security and privacy~Privacy protections</concept_desc>
       <concept_significance>500</concept_significance>
       </concept>
 </ccs2012>
\end{CCSXML}

\ccsdesc[500]{Security and privacy~Privacy protections}

\keywords{Privacy, Wireless Sensing, Smart Textile}


\begin{teaserfigure}
  \includegraphics[width=\textwidth]{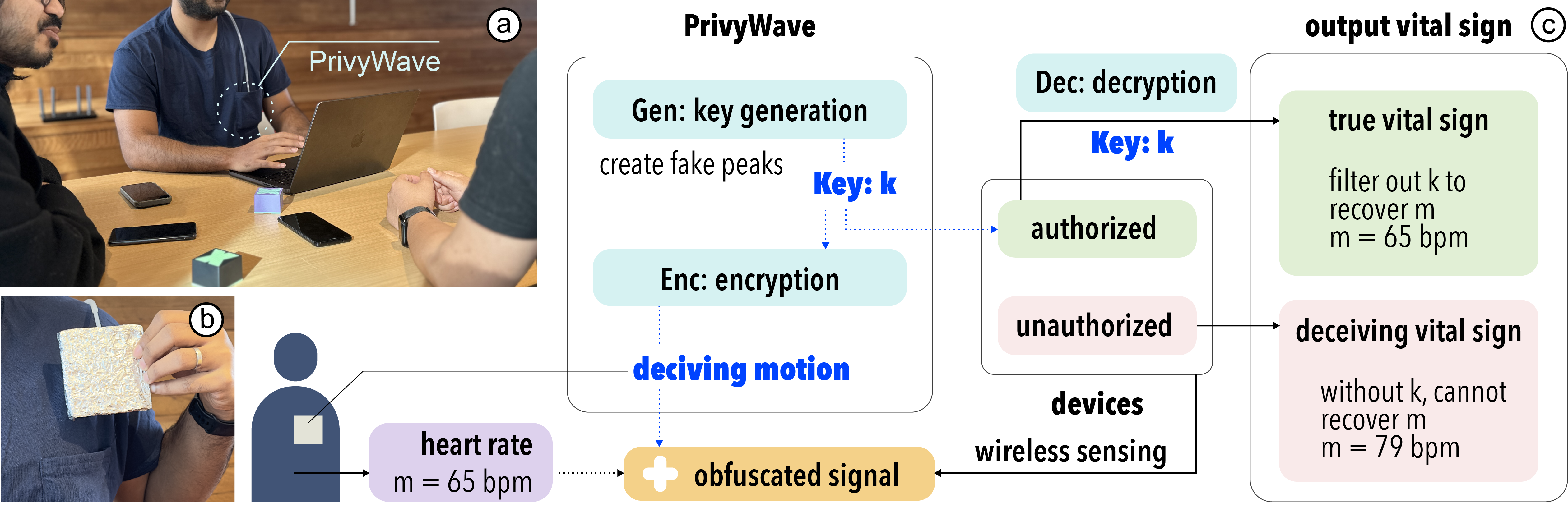}
  \caption{\name{} system overview. (a) A person wearing \name{} in office, surrounded by ubiquitous devices (e.g., mobile phones, laptops, speakers, Wi-Fi router) that could potentially function as wireless senors to pick up vital signal from people without them noticing. (b) \name{} pneumatic actuator device. (c) System workflow: key generation creates decoy signal frequencies $k$, encryption superimposes decoy motion on true vital sign $m$ using the actuator, and authorized devices use $k$ to decrypt and recover the true signal while unauthorized devices observe one of the decoy signals.}
  \Description{xxxxxx}
  \label{fig:teaser}
\end{teaserfigure}

\maketitle

\section{Introduction}

Wireless sensing technologies have advanced rapidly in recent years. Systems using acoustic~\cite{nandakumar2015contactless}, WiFi~\cite{abdelnasser2015ubibreathe}, and millimeter-wave~(mmWave)~\cite{yang2017vital} sensors can now monitor vital signs without any physical contact, detecting chest movements as small as few millimeters to measure breathing and heartbeat. These capabilities enable critical healthcare applications, such as detecting sleep apnea~\cite{nandakumar2015contactless}, identifying opioid overdoses~\cite{nandakumar2019opioid}, even detecting cardiac arrest and calling emergency services~\cite{chan2019contactless} all without requiring users to wear sensors.

However, these same technological capabilities that promise healthcare benefits also create privacy risks. Any ambient device such as a WiFi router, smart speaker, home assistant or someone's smartphone, can potentially become a covert physiological signal monitor, silently extracting vital signs without user awareness or consent. This threat is particularly concerning because vital signs reveal far more than basic metrics, they could expose emotional states~\cite{zhao2016emotion}, or stress levels~\cite{ha2021wistress} that may even indicate mental health conditions. In addition, some wireless signals are invisible and penetrate walls and clothing, creating an omnipresent surveillance risk which is hard for users to detect or avoid. So as wireless sensing becomes ubiquitous, developing solutions that maintain its practical benefits while preventing unauthorized surveillance is a necessity.



Existing privacy protection approaches for wireless signals fall into two categories: post-collection data processing and real-time protection. Post-collection approaches~\cite{shi2010prisense,hassan2019differential,liu2023application} protect data after acquisition through techniques such as data aggregation \cite{shi2010prisense}, differential privacy~\cite{hassan2019differential}, and signal tampering~\cite{liu2023application}. However, these methods cannot prevent unauthorized sensing in the first place, as adversaries can collect raw signals in real-time before any post-collection protection is applied. For real-time protection, jamming methods~\cite{chen2020wearable, yu2025dynamic, qiao2025nusguard} inject noise to degrade signal quality, but they are modality-specific—protecting against one type of sensor (e.g., acoustic) but not others (e.g., radio-frequency or RF). Anti-Sensing~\cite{oshim2025anti} uses wearable oscillators to mislead radar-based heartbeat detection, but blocks all sensors indiscriminately. VitalHide~\cite{gao2025vitalhide} presented the first approach for selective protection using vibration-based obfuscation. However, this work remained conceptual, lacking theoretical foundations, formal privacy analysis, and systematic validation.

We present \name{}, a combined software and hardware solution that protects users from unauthorized wireless monitoring while preserving the full utility of authorized sensing. Our central idea is to enable private wireless sensing using decodable physical-layer obfuscation technique. We generate controlled, physical actuation co-located with the user's body, which mimics the periodic motion of vital signs (e.g., a heartbeat) to create plausible decoy signals. Since wireless sensing systems operate by detecting such periodic motions, an unauthorized sensor observing the user perceives a composite \textit{obfuscated} signal and is unable to distinguish the user's true vital sign from the decoys. Conversely, an authorized system, possessing a shared cryptographic key, can computationally identify and filter out these decoy signals, thereby recovering the user's true signal with high fidelity. Our system overview is given in Fig. \ref{fig:teaser}.

This method has two main advantages i) The obfuscation is done by generating fake motions and hence this method is agnostic to different wireless modalities and frequencies such as acoustic, RF (WiFi, mmWave). ii) The generated decoy signals are similar to legitimate physiological signals which the unauthorized devices cannot distinguish.

In this work, first,  we design a cryptographic framework for physical-layer obfuscation that provides a privacy guarantee. We define (1) a key generation procedure for creating valid decoy signal frequencies to obfuscate the true signal, (2) an encryption algorithm that physically generates these decoys through actuation and obfuscates true vital signal, and finally (3) a decryption algorithm that enables authorized devices to recover the true signal by filtering out known decoy frequencies with they key. 
We mathematically show that unauthorized attackers gain negligible distinguishing advantage from random guessing among the decoy frequencies.
Second, we implement this framework through a compact silicone-based pneumatic prototype design that generates heartbeat-like actuation patterns while fitting inside the user's pocket. Third, we experimentally validate that our system works across multiple sensing modalities: frequency-modulated continuous wave~(FMCW) mmWave radars and acoustic sonars, demonstrating modality-agnostic protection.

We validate \name{} for mmWave and acoustic systems through a user study with \textcolor{black}{13} participants. The results show that detection error and standard deviation for unauthorized sensor is significantly higher ($p<.001$) than for authorized devices. For heart rate detection, the average mean absolute error (MAE) for mmWave unauthorized and authorized devices are 21.3 and 5.8 BPM, while for acoustic are 42 and 9.7 BPM respectively. We also show the effectiveness of \name{} in different environments as well as different radar range and orientation. 
These results demonstrate that \name{} enables a new paradigm for wireless vital sign sensing, one where users can benefit from continuous wireless health monitoring while not having to worry about unwanted surveillance.



In this work, we explore the possibility of preserving the utility of wireless sensing while giving users agency over their privacy. From developing a cryptographic obfuscation framework for physical signals to building actual hardware systems, this work demonstrates that functionality and strong privacy protection can coexist in wireless sensing systems. To summarize, our key contributions are:
\begin{itemize}
    \item We design and build the first motion-based, modality-agnostic, and privacy-preserving physical layer obfuscation wireless sensing system for vital sign monitoring called \name{} that protects against unauthorized monitoring while keeping the full utility of authorized devices.
    \item We provide a mathematical bound for the unauthorized attacker's distinguishing advantage for the designed physical layer obfuscation system.
    \item We demonstrate the effectiveness of \name{} through a comprehensive user study and micro-benchmarks, which shows a significantly higher detection error for unauthorized devices.
\end{itemize}

\section{Related Work}

We discuss existing approaches for wireless vital sign privacy protection. We first review the capabilities of wireless vital sign sensing systems and their privacy implications. We then examine the current SOTA privacy protection solutions for wireless sensing, categorized into post-processing and real-time protection approaches. We then focus particularly on real-time protection systems, which can be further divided into jamming-based and obfuscation-based methods.

\begin{table*}[t]
\centering
\caption{Comparison of Privacy Protection Approaches}
\label{tab:related_work_comparison}
\small
\begin{tabular}{@{}llcccccc@{}}
\toprule
\textbf{Approach} & \textbf{Target} & \textbf{Real-Time} & \textbf{Selective} & \textbf{Multi-} & \textbf{Security} & \textbf{Auth.} \\
 & \textbf{Domain} & \textbf{Protect.} & \textbf{Protect.} & \textbf{Modal} & \textbf{Guarantee} & \textbf{Access} \\
\midrule
\multicolumn{7}{l}{\textit{Post-Collection Processing}} \\
\midrule
PriSense~\cite{shi2010prisense} & Sensor Data & \xmark & \xmark & \cmark & \cmark & N/A \\
Diff. Privacy~\cite{hassan2019differential} & Sensor Data & \xmark & \xmark & \cmark & \cmark & N/A \\
mmFilter~\cite{liu2023application} & mmWave Radar & \xmark & \cmark$^{\ddag}$ & \xmark & \xmark & N/A \\
\midrule
\multicolumn{7}{l}{\textit{Real-Time Jamming}} \\
\midrule
Wearable Jammer~\cite{chen2020wearable} & Audio & \cmark & \xmark & \xmark & \xmark & \xmark \\
Dynamic Jamming~\cite{yu2025dynamic} & Audio & \cmark & \xmark & \xmark & \xmark & \xmark \\
NUSGuard~\cite{qiao2025nusguard} & Voice & \cmark & \cmark$^{\dag}$ & \xmark & \xmark & \cmark$^{\dag}$ \\
\midrule
\multicolumn{7}{l}{\textit{Real-Time Obfuscation}} \\
\midrule
RF-Protect~\cite{shenoy2022rf} & Human Tracking & \cmark & \xmark & \xmark & \xmark & \xmark \\
Radar Obfus.~\cite{argyriou2023obfuscation} & Activity & \cmark & \xmark & \xmark & \xmark & \xmark \\
Anti-Sensing~\cite{oshim2025anti} & Vital Signs & \cmark & \xmark & \cmark & \xmark & \xmark \\
VitalHide~\cite{gao2025vitalhide} & Vital Signs & \cmark & \cmark$^{*}$ & \cmark$^{*}$ & \xmark & \cmark$^{*}$ \\
\midrule
\textbf{\name{} (Ours)} & \textbf{Vital Signs} & \cmark & \cmark & \cmark & \cmark & \cmark \\
\bottomrule
\multicolumn{7}{l}{\footnotesize $^{*}$Proof-of-concept only; $^{\dag}$Temporal selectivity only; $^{\ddag}$Application-level, not device-level} \\
\multicolumn{7}{l}{\footnotesize \cmark: Supported; \xmark: Not supported; N/A: Not applicable} \\
\end{tabular}
\end{table*}

\subsection{Wireless Vital Sign Sensing and Privacy Implications}

Wireless vital sign sensing has evolved into reliable systems with high accuracy. Many smart devices have gained the ability to accurately detect vital signs without physical contact. Nandakumar et al.~\cite{nandakumar2015contactless} demonstrated that smartphones can be turned into vital sign monitors by leveraging the acoustic sensors in the device. WiFi-based approaches extract breathing and heartbeat from the RF signals reflected by the subject~\cite{abdelnasser2015ubibreathe, liu2015tracking}. The mmWave sensors, commonly deployed for presence detection, can also measure breathing and heartbeat with high precision~\cite{yang2017vital, gong2021rf}.These sensing capabilities have enabled beneficial applications in everyday life. Researchers have developed systems for sleep apnea detection~\cite{nandakumar2015contactless}, which can alert users to potentially dangerous breathing interruptions during sleep. Contactless vital sign monitoring has been proposed for overdose detection~\cite{nandakumar2019opioid}, enabling earlier intervention in emergency situations. Cardiac arrest detection systems~\cite{chan2019contactless} can automatically alert emergency services when abnormal vital signs are detected. However, vital signals carry far more information than just physiological measurements. Stress levels can be inferred from vital signals~\cite{ha2021wistress}, emotional states can be detected from combined respiratory and cardiac patterns~\cite{zhao2016emotion}, and cognitive load during mental tasks can be assessed through cardiovascular responses~\cite{solhjoo2019heart}. As sensing accuracy continues to improve, the precision of these inferences also increases, making the privacy risks more severe.

This dual nature of wireless vital sign detection, which simultaneously allows beneficial health monitoring and creates unprecedented privacy risks, motivates the need for privacy protection mechanisms that can distinguish between authorized and unauthorized detection.

\subsection{Privacy Protection for Wireless Sensing}
We categorize existing privacy protection approaches for wireless signals into two main strategies: post-collection data processing and real-time signal protection.

\subsubsection{Post-Collection Data Processing}
There are a set of algorithms that focuses on protecting data after it has been collected by sensors. These approaches assume that sensing has already occurred and apply privacy-preserving techniques during subsequent processing, transmission, or storage stages. PriSense~\cite{shi2010prisense} introduced privacy-preserving data aggregation for sensor networks using data slicing, where each sensor splits its reading into multiple shares distributed to randomly selected cover nodes, protecting individual data unless the aggregation server colludes with all cover nodes. Differential privacy approaches~\cite{hassan2019differential} provide stronger theoretical guarantees by adding calibrated noise to sensor data datasets, ensuring that the presence or absence of any individual's data cannot be reliably determined while preserving aggregate statistics. For mmWave radar-based sensing, mmFilter~\cite{liu2023application} applies signal reversion techniques that tamper with radar data after collection but before transmission to sensing processors. While these post-collection methods provide valuable privacy protections for captured data, they share a fundamental limitation: they cannot prevent unauthorized sensing in the first place. An adversary can still collect the raw signals before any privacy protection is applied.

\subsubsection{Real-Time Signal Protection}

To enable protection against unauthorized sensing in real-time, researchers have developed technologies that mainly fall into two categories: jamming and obfuscation.

\paragraph{Jamming-Based Protection}
Jamming approaches inject noise or interference to degrade signal quality(decrease SNR) for unauthorized receivers. For audio privacy, ultrasonic jamming exploits microphone nonlinearity: high-frequency ultrasonic signals cause microphones to produce audible-range noise, corrupting recordings. Chen et al.\cite{chen2020wearable} developed a wearable bracelet with 24 ultrasonic transducers providing omnidirectional microphone jamming. Yu et al.\cite{yu2025dynamic} improved efficiency through adaptive jamming that analyzes speech characteristics in real-time and generates time-frequency interference patterns matched to audio content. NUSGuard~\cite{qiao2025nusguard} introduces temporal selectivity by detecting when users interact with authorized voice assistants, temporarily disabling jamming during those interactions.
However, jamming approaches face fundamental limitations. Jamming is inherently modality-specific: it protects against one type of sensing signal, for example acoustic, but not RF or other sensing modalities, and vice versa, and we do not have control over what modality an unauthorized sensor uses.

\paragraph{Obfuscation-Based Protection}
Obfuscation techniques inject plausible decoy information rather than noise, making it difficult to distinguish true signals from fake alternatives\cite{brunton2015obfuscation}. The core principle is to hide the real signal among believable decoys.
For human tracking, RF-Protect~\cite{shenoy2022rf} introduced a hardware reflector-based approach that injects phantom humans into device-free tracking systems. The system uses specially designed reflectors to modify radio waves and create reflections at arbitrary locations, combined with a generative model to create realistic human trajectories. For activity recognition, Argyriou~\cite{argyriou2023obfuscation} demonstrated that synthetic motion patterns can obfuscate human micro-Doppler signatures in passive radar. For vital sign protection, Anti-Sensing~\cite{oshim2025anti} uses wearable oscillators that generate motion patterns mimicking natural cardiac motion, creating decoy signals that mislead radar-based heartbeat detection. While the approach successfully generates realistic oscillatory patterns, it blocks all radar sensors indiscriminately without providing selective access for authorized monitoring. VitalHide~\cite{gao2025vitalhide} recently explored whether phone vibrators and smart textile actuators could generate physical obfuscation for vital signs, creating decoy heartbeat signals to confuse unauthorized sensors while potentially allowing authorized devices to filter them out. The proof-of-concept showed that vibration-based obfuscation could reduce unauthorized detection accuracy. However, this work remained at the conceptual demonstration stage, lacking theoretical foundations, formal security analysis, and systematic validation of the selective access mechanism across different sensing modalities.

Our work builds on the obfuscation approach but provides key advances: (1) a formal cryptographic framework for key-based selective access, including key generation, encoding, and decoding algorithms, (2) security analysis with provable guarantees for authorized and unauthorized scenarios, (3) wearable hardware designs that realize this framework on the human body, (4) systematic evaluation across multiple sensing modalities (mmWave and acoustic), and (5) performance benchmarks under varying physical conditions. Our system enables authorized monitoring while protecting against unauthorized sensing through cryptographic key-based decoy filtering.

\section{background - FMCW radar}

Frequency Modulated Continuous Wave (FMCW)~\cite{stove1992linear} has emerged as the preferred technology for contactless vital sign monitoring due to its ability to detect sub-millimeter movements with high precision without the need for sampling signal at carrier frequency. This capability enables FMCW radar to be implemented across a variety of signal modalities. In this section, we cover the FMCW radar theory and signal processing techniques used for extracting vital signs. We will then explain the design of \name that can protect the subjects from unauthorized sensors that run these algorithms to extract physiological signals.

\subsection{FMCW Theory}
The FMCW system transmits a chirp signal $s_{tx}(t)$, a sinusoid whose frequency increases linearly over time:
\begin{equation}
s_{tx}(t) = A_t \exp\left(j2\pi\left(f_c t + \frac{B}{2T_{chirp}}t^2\right)\right)
\end{equation}
where $f_c$ is the carrier frequency, $B$ is the bandwidth, and $T_{chirp}$ is the chirp duration. When this chirp reflects off a target at distance $d(t) = d_0 + x(t)$, where $x(t)$ represents chest displacement from breathing and heartbeat, the received signal experiences a time delay $\tau = 2d(t)/c$:
\begin{equation}
s_{rx}(t) = A_r \exp\left(j2\pi\left(f_c (t-\tau) + \frac{B}{2T_{chirp}}(t-\tau)^2\right)\right)
\end{equation}

After dechirping (mixing transmitted and received signals), the intermediate frequency (IF) signal contains two critical phase components:
\begin{equation}
s_{IF}(t) = A_{IF} \exp\left(j\underbrace{\frac{4\pi B d(t)}{cT_{chirp}}t}_{\text{beat frequency}} + j\underbrace{\frac{4\pi f_c d(t)}{c}}_{\text{phase from displacement}}\right)
\end{equation}

These two phase terms serve distinct purposes in vital sign sensing. The first term, $\frac{4\pi B d(t)}{cT_{chirp}}t$, represents a beat frequency that is proportional to the target distance $d(t)$. This term enables range separation: by applying FFT over samples within a single chirp (typically 50-100 microseconds), we obtain a range profile where echoes from different distances produce distinct frequency peaks. The range resolution is determined by the bandwidth: $\Delta R = c/(2B)$. 

The second term, $\phi(t) = 4\pi f_c d(t)/c = 4\pi d(t)/\lambda$, is the phase component that encodes fine-grained displacement information. Within a single chirp duration, this phase remains approximately constant since the chest displacement $x(t)$ changes negligibly over microseconds. However, across multiple chirps (frame period typically 20-50 milliseconds), this phase evolves as the chest moves due to breathing and heartbeat.  By tracking this phase evolution across consecutive chirps, we extract the time-varying displacement signal:
\begin{equation}
x(t) = \frac{\lambda}{4\pi} \phi(t)
\end{equation}

\subsection{Signal Processing Methods for Vital Sign Extraction}

Once displacement $x(t)$ is extracted, various methods process it to isolate vital signs.

\paragraph{Non-Learning-Based signal processing Methods}
Peak detection algorithms~\cite{alizadeh2019remote} identify local maxima and minima in $x(t)$ corresponding to breathing cycles or cardiac phases, determining instantaneous rates from time intervals between peaks. FFT-based methods apply bandpass filtering to isolate specific frequency ranges (0.1--0.5~Hz for breathing, 0.8--2.0~Hz for heart rate), identifying peak frequencies as vital sign rates~\cite{alizadeh2019remote,lee2019novel}. Time-frequency decomposition techniques ~EMD/EEMD~\cite{xu2022non}, VMD~\cite{zhao2023accurate}, and wavelet transforms~\cite{wang2023slprof} decompose the displacement signal into simpler signal components for analysis.

\paragraph{Learning-Based Methods}
Deep learning approaches train neural networks on the features that include the displacement $x(t)$~\cite{wang2023here,kim2024heartbeatnet} to estimate vital signs. These methods handle complex scenarios including multi-person environments~\cite{wang2020vimo} and low SNR conditions, but require substantial labeled data and may not generalize across deployment environments.

\subsection{Different modalities of wireless sensing}
While both RF and acoustic systems apply identical FMCW principles and similar signal processing methods described above, they operate at different ranges with distinct trade-offs. Millimeter-wave radar transmits and receives RF signal of frequency range 60--77 GHz that travel at the speed of light. They achieve a millimeter level range resolution and a range of 3 to 5 meters and can penetrate clothes. However they require specialized hardware. WiFi based RF radars operate at 2.4 and 5 Ghz and has similar properties where they can penetrate evn through walls. However they also require specialized hardware such as expensive USRP. Acoustic FMCW (18--22 kHz) achieves millimeter level range resolution on commodity smartphones as the speed of sound is much lower than RF($c_{sound} \approx 343$ m/s), but the range of the system is limited to 0.5--1.5 meters with poor penetration and high noise susceptibility. The choice reflects deployment context: acoustic systems democratize sensing through ubiquitous devices for close-proximity applications, while RF based systems such as WiFi and mmWave enables through-clothing monitoring at larger distances essential for privacy-sensitive scenarios.

In order to build a system that can hide from unauthorized sensors without depending on what modality they use and what algorithms they used, we need to build physical obfuscation: co-located fake heartbeat signals on the human body. 

Despite the differences in the algorithmic diversity, all these systems record the reflections of custom frequency signals and analyze the variations caused by sub-centimeter motion generated from the human body. Hence, a co-located heartbeat generator with the human would obfuscate all sensing algorithms. For example, peak-detection algorithms will be confused by the peaks induced by the fake heartbeat, while frequency-dependent algorithms will be confused by the new frequency components added by the fake heart rate.

\section{System Design}
\subsection{System Overview}
To guarantee privacy against different sensing modality (e.g., acoustic and mmWave) that could run any signal processing algorithm, we designed \name{}, which operates by generating controlled heartbeat-like motions using a pneumatic-based device prototype. We carefully design the obfuscation mechanism so that a single device can generate multiple different-frequency heartbeat-like signals in real time. From an unauthorized sensor's perspective, it will detect multiple heartbeat signals with the real one immersed among them, while authorized sensors that possess the cryptographic key can recover the true signal from the obfuscated composite signal. 

We introduce our system by first formally formulating the problem and defining the threat model. We then present the obfuscation algorithm that describes how we generate the obfuscation signals, how these signals are encoded, and how authorized sensors can decode them. This is followed by a formal privacy bound analysis. Finally, we show how we implement the obfuscation scheme on a hardware prototype.

\subsection{Problem Formulation}

Our goal is to design a system that could enable provable secure wireless vital sign monitoring where authorized devices can accurately monitor within a negligible error, while unauthorized devices are prevented from extracting meaningful information in real-time. 

From the lens of cryptography theory, the task of privacy-preserving wireless vital sign sensing system can be seen as a "secure communication" event. The user ($\mathcal{U}$) whose vital signs is to be protected becomes the \textit{sender} (aka Alice), the vital sign is the \textit{private message} ($m$) that the user is trying to send to an authorized wireless sensing system (aka Bob), who becomes the \textit{receiver} ($\mathcal{V}$). The unauthorized wireless sensing systems, who try to intercept the \textit{private message} becomes the \textit{adversary} ($\mathcal{A}$) (aka Eve). In this work, we take a physical layer obfuscation approach, which \textit{obfuscates} $m$ into a ciphertext $c$, to provide privacy to the user.

\subsection{Threat Model}
We model the adversary $\mathcal{A}$ as a passive eavesdropper seeking to measure the user's vital signs through wireless sensing. We assume $\mathcal{A}$ is computationally bounded (a probabilistic polynomial-time adversary) and operates under the following conditions:

\begin{itemize}
    \item \textbf{System Knowledge:} $\mathcal{A}$ has complete knowledge of the system design, algorithms, and probability distributions (per Kerckhoffs's Principle). Security relies solely on the secrecy of the session-specific cryptographic key ($k$), which defines the decoy signal frequencies.
    
    \item \textbf{Sensing Capabilities:} $\mathcal{A}$ can deploy arbitrary wireless sensing equipment (e.g., mmWave radar, acoustic FMCW sensors) with any number of antennas and apply any signal processing algorithm to extract vital signs.
    
    \item \textbf{Spatial Resolution Limit:} $\mathcal{A}$ cannot spatially separate the user's true vital sign motion from the co-located decoy motion generated by \name{}. Both signals originate from the same location on the user's body and are perceived by the sensor as a single, superimposed signal with multiple frequency components.
    
    \item \textbf{Passive Attack Constraint:} $\mathcal{A}$ is restricted to passive observation only. The model excludes active attacks such as stimulus-response probing (e.g., inducing a physical startle to identify reactive vs. non-reactive signals). Defending against such active attacks is left for future work.
\end{itemize}

Under this threat model, our goal is to prevent adversary $\mathcal{A}$ from distinguishing the true vital sign frequency from obfuscation frequencies with probability better than random guessing.

\subsection{Obfuscation Scheme design}

In this section, we design our obfuscation scheme, which consists of three core algorithms that set the theoretical foundation for \name{}: (1) \textbf{Gen} (key generation), which creates cryptographic keys containing decoy frequencies; (2) \textbf{Enc} (encryption), which physically generates obfuscation by actuating decoy signals; and (3) \textbf{Dec} (decryption), which enables authorized sensors to recover the true vital sign by filtering out known decoys. We then show the correctness of the framework and analyze the privacy guarantees of the scheme.

\subsubsection{Key Generation}

The key generation algorithm Gen produces a set of decoy signal frequencies that will be used to obfuscate the user's true vital signs. The algorithm samples $p$ frequencies from a physiologically plausible range $S$ (e.g., 60-100 BPM for heart rate), ensuring that the generated decoy frequencies are indistinguishable from actual vital signs. These frequencies are stored in a cryptographic key $k = (f_1, \ldots, f_p)$ that is shared between the user and authorized sensors.

\begin{algorithm}
\caption{Key Generation}
\label{alg:keygen}
\begin{algorithmic}[1]
\Procedure{Gen}{$p, S$}
    \State \textbf{Input:} Number of decoys $p$, Physiological range $S$ (e.g., 60-100 BPM)
    \State \textbf{Output:} Key $k = (f_1, \ldots, f_p)$
    \For{$i = 1$ to $p$}
        \State Sample frequency $f_i$ from range $S$
        \State Add $f_i$ to key $k$
    \EndFor
    \State \Return $k$
\EndProcedure
\end{algorithmic}
\end{algorithm}

By sampling decoys from the plausible heart rate range $S$, we prevent statistical attacks where adversaries might identify outliers based on physiological implausibility. All generated frequencies appear as valid vital signs, making them indistinguishable from the user's actual heartbeat without knowledge of the key $k$.

\subsubsection{Physical Obfuscation (Encryption)}

The encryption algorithm Enc physically generates the obfuscation by actuating a pneumatic device at the decoy frequencies specified in the key $k$. This process creates real physical motion co-located with the user's body that wireless sensors detect alongside the user's natural vital signs. The input to Enc is the key $k$ containing the decoy frequencies and the user's true vital sign signal $m$. The output is an obfuscated signal $c$ that represents the composite physical motion observed by any wireless sensor. Critically, this is not a digital encryption, it is a physical process where the actuator generates periodic motions at frequencies $f_1, \ldots, f_p$, which superimpose with the user's natural heartbeat motion at frequency $m$.

\begin{algorithm}
\caption{Physical Signal Obfuscation}
\label{alg:encrypt}
\begin{algorithmic}[1]
\Procedure{Enc}{$k, m$}
    \State \textbf{Input:} Key $k = (f_1, \ldots, f_p)$, True signal $m$
    \State \textbf{Output:} Obfuscated signal $c$
    \State Activate pneumatic actuator at frequencies $f_1, \ldots, f_p$
    \State Physically superimpose actuated frequencies on $m$ to create $c$
    \State \Return $c$ \Comment{$c$ contains dominant frequencies: $\{m, f_1, \ldots, f_p\}$}
\EndProcedure
\end{algorithmic}
\end{algorithm}

The resulting obfuscated signal $c$ contains $p+1$ dominant frequency components: the true vital sign $m$ plus $p$ decoy frequencies. Both authorized and unauthorized sensors observe the same physical phenomenon, a composite motion signal with multiple periodic components. The critical difference is that unauthorized sensors cannot determine which of these $p+1$ frequencies represents the true vital sign, while authorized sensors possess the key $k$ that identifies the decoy frequencies. We discuss the physical implementation details of the pneumatic actuator in Section~\ref{sec:implementation}.

\subsubsection{Signal Recovery (Decryption)}

The decryption algorithm Dec enables authorized devices to recover the user's true vital sign from the obfuscated signal $c$. Given the key $k$ that specifies the decoy frequencies, Dec applies a series of notch filters and band-stop filters centered at each decoy frequency $f_1, \ldots, f_p$. This filtering process removes the known decoy components from the composite signal, leaving only the true vital sign $m$. The algorithm takes as input the key $k$ and the obfuscated signal $c$ observed by the sensor, and outputs the recovered true signal $m$.

\begin{algorithm}
\caption{Authorized Signal Recovery}
\label{alg:decrypt}
\begin{algorithmic}[1]
\Procedure{Dec}{$k, c$}
    \State \textbf{Input:} Key $k = (f_1, \ldots, f_p)$, Obfuscated signal $c$
    \State \textbf{Output:} True signal $m$
    \State Apply band-stop filters at frequencies $f_1, \ldots, f_p$
    \State $m \leftarrow c \setminus k$ \Comment{Remove decoy frequencies}
    \State \Return $m$
\EndProcedure
\end{algorithmic}
\end{algorithm}

In practice, this is implemented through cascaded notch or band-stop filtering in the frequency domain. Each filter is designed with a narrow bandwidth centered at a decoy frequency, ensuring that it removes the decoy component while preserving the true vital sign and minimizing distortion. This filtering approach is general and works regardless of what signal processing algorithm the sensor uses for vital sign extraction, as the decoy removal happens at the fundamental signal level before any algorithm-specific processing.

\subsubsection{Correctness of the Scheme}
\label{sec:correctness}

A scheme is correct if an authorized user $\mathcal{V}$ (who possesses the key $k$) can always recover the original message $m$ from the observed obfuscated ciphertext $c$.
Formally, we must show that for any message $m \in S$ and any key $k$ generated by $\texttt{Gen}(p, S)$, the following holds:
\begin{equation}
    \texttt{Dec}(k, \texttt{Enc}(k, m)) = m
\end{equation}

\begin{proof}
The proof follows directly from the definitions of the algorithms. The encryption process $\texttt{Enc}(k, m)$ is a physical process that produces the observed multiset $c$ of frequencies. As defined in Algorithm~\ref{alg:encrypt}, this multiset is $c \coloneqq \{m\} \cup k$. The decryption process $\texttt{Dec}(k, c)$ takes $c$ and $k$ as input and, as defined in Algorithm~\ref{alg:decrypt}, computes the multiset difference $m' \coloneqq c \setminus k$. 

By substituting the definition of $c$ into the decryption operation, we obtain:
$$
m' = (\{m\} \cup k) \setminus k
$$
By the definition of multiset difference, this operation removes all $p$ elements of $k$ from the multiset, leaving only the single element $m$, thus $m' = m$.

This correctness holds even in the negligible-probability collision case where $m = p_i$ for some $p_i \in k$. In such cases, the multiset $c$ would contain two instances of the same value, and the decryption operation $c \setminus k$ correctly removes one instance (the decoy) while preserving the other (the message). Therefore, the scheme is correct.
\end{proof}

\subsection{Privacy Guarantee}
\label{sec:privacy_guarantee}

Our security objective is to ensure that an adversary cannot distinguish the user's true signal $m$ from the $p$ decoy signals. We prove that the adversary gains no meaningful advantage in identifying which of the $p+1$ observed signals represents the true message.

\subsubsection{Defining Privacy}

We formalize the adversary's task as follows: after observing the obfuscated signal $c$ containing $p+1$ frequency components, the adversary $\mathcal{A}$ must guess which one corresponds to the true vital sign. The adversary outputs an index $j \in \{1, \ldots, p+1\}$ representing their guess.

\begin{definition}[Adversary's Advantage]
A random guess succeeds with probability $1/(p+1)$. We measure the adversary's capability by their \emph{advantage}—how much better they perform compared to random guessing:
\begin{equation}
    \epsilon_{\text{adv}} \coloneqq P(\mathcal{A}\text{'s guess is correct}) - \frac{1}{p+1}
\end{equation}
\end{definition}

\noindent An obfuscation scheme is considered secure if $\epsilon_{\text{adv}}$ is bounded by a negligible value. Our goal is to prove that $\epsilon_{\text{adv}}$ is negligibly small.

\subsubsection{Analysis Framework}

To analyze the adversary's advantage, we partition all possible ciphertexts into two categories based on whether frequency collisions occur. A \emph{collision} happens when the true signal frequency $m$ coincidentally equals one of the decoy frequencies in the key $k$.

\paragraph{Non-colliding ("Good") Ciphertexts $\mathcal{C}_{\text{good}}$.} 
In this case, all $p+1$ signals (the true signal $m$ and $p$ decoys) have distinct frequencies. This is the typical operational scenario. For example, if the true heart rate is 75 bpm and we generate 3 decoys at 68, 82, and 91 bpm, the observed ciphertext is $c = \{68, 75, 82, 91\}$ with 4 distinct values. An adversary cannot determine which frequency corresponds to the true signal without the key.

\paragraph{Colliding ("Bad") Ciphertexts $\mathcal{C}_{\text{bad}}$.}
In this rare case, at least two signals have the same frequency—a collision between the true signal and a decoy.For example, if the true heart rate is 75 bpm and by chance a decoy is also generated at 75 bpm, the observed ciphertext is $c = \{68, 75, 75, 91\}$. The repeated value could potentially leak information: an adversary might reason that "75 appears twice, so it's more likely that one is real and one is a decoy," potentially gaining an advantage.

The key insight of our analysis is that collisions are extremely rare. For typical parameters, the collision probability $\delta$ is approximately $10^{-4}$. Therefore, we only need to bound the adversary's advantage in this unlikely case to prove overall security.

\subsubsection{Privacy in the Good Case}
\label{sec:proof_good_case}

We first analyze the standard operational case where all $p+1$ signals in $c$ are distinct. This occurs with overwhelming probability $1-\delta$.

\begin{lemma}
\label{lem:good_case}
For any non-colliding ciphertext $c \in \mathcal{C}_{\text{good}}$, the adversary's advantage is exactly zero.
\end{lemma}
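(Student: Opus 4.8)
The plan is to show that, conditioned on observing any fixed non-colliding ciphertext $c = \{v_1,\dots,v_{p+1}\}$ with all $p+1$ values distinct, the posterior distribution over which component equals the true signal $m$ is exactly uniform; since an optimal (even adaptive, randomized) adversary can then only guess correctly with probability exactly $1/(p+1)$, the conditional advantage $\epsilon_{\text{adv}}$ is $0$. I would argue via Bayes' rule, treating $c$ as the unordered multiset of spectral peaks that any sensor can read, with randomness taken over the message prior and the coins of $\texttt{Gen}$: for each index $j$, $\Pr[m = v_j \mid c] \propto \Pr[c \mid m = v_j]\cdot\Pr[m = v_j]$.

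The steps, in order: (i) observe that the event $\{c\text{ observed}\}\cap\{m=v_j\}$ is exactly the event that the key multiset $\{f_1,\dots,f_p\}$ equals $c\setminus\{v_j\}$, so $\Pr[c\mid m=v_j]$ is the probability that the $p$ samples drawn by $\texttt{Gen}$ realize the $p$-element set $\{v_i : i\neq j\}$; (ii) because $c$ is non-colliding, $\{v_i : i\neq j\}$ is a set of $p$ \emph{distinct} values for \emph{every} $j$, so this probability equals $p!\,\prod_{i\neq j}\Pr[f=v_i] = p!/|S|^p$ under uniform sampling (probabilities, or densities if $S$ is continuous) --- a quantity independent of $j$; (iii) under the model justified in the discussion following Algorithm~\ref{alg:keygen}, where decoys are drawn from the same plausible-range distribution that the adversary's prior assigns to $m$, the factor $\Pr[m=v_j]$ is likewise independent of $j$; (iv) hence the unnormalized posterior $\Pr[c\mid m=v_j]\,\Pr[m=v_j]$ is the same for all $j$, so after normalizing $\Pr[m=v_j\mid c]=1/(p+1)$; (v) therefore every index the adversary can output, as an arbitrary (possibly randomized) function of $c$, is correct with probability exactly $1/(p+1)$, giving $\epsilon_{\text{adv}}=0$.

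An equivalent and cleaner way to package (i)--(iv) is a probability-preserving bijection on the underlying sample space that swaps the value occupying the ``message slot'' with an occurrence of another value among the key coordinates: it maps executions consistent with ``$j$ is true'' to executions consistent with ``$j'$ is true'', preserves the observed multiset $c$, and preserves probability precisely when the message distribution matches the decoy distribution. I expect this last point to be the main obstacle: if $m$ were drawn from a distribution over $S$ differing from the decoy distribution, then $\Pr[m=v_j]$ would not be constant and the advantage would be $\Theta\!\big(\max_j \Pr[m=v_j] - 1/(p+1)\big)$ rather than $0$, so the word ``exactly'' in the statement is genuinely tied to that modeling assumption. The only other care needed is to carry the combinatorial $p!$ factor and the multiset-versus-tuple bookkeeping identically on both sides of Bayes' rule; the non-colliding hypothesis is exactly what forces those quantities to coincide across all $j$, which is the crux of the lemma.
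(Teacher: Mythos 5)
Your proposal is correct and follows essentially the same route as the paper: Bayes' rule on the observed multiset $c$, the observation that conditioning on $m=v_j$ fixes the key to $c\setminus\{v_j\}$, uniformity of the resulting posterior over the $p+1$ components, and hence success probability exactly $1/(p+1)$ for any (randomized) adversary. The only substantive nuance is that the paper keeps a general common density $f_S$ and argues that the full product $\prod_{k=1}^{p+1} f_S(s_k)$ is constant in $j$ (so neither factor need be constant on its own), whereas your steps (ii)--(iii) make each factor separately constant, which strictly requires uniform sampling --- but your bijection packaging and your closing remark that the result hinges on the message and decoy distributions coinciding recover exactly the paper's general argument, so the proofs agree in substance.
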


\begin{proof}
An optimal adversary will use Bayesian inference to find the signal $s_j \in c$ that maximizes the posterior probability $P(M=s_j \mid c)$, where $M$ denotes the true message. By Bayes' rule:
\begin{align*}
    P(M=s_j \mid c) &\propto P(c \mid M=s_j) \cdot P(M=s_j) \\
    &\propto P(\text{key is } c \setminus \{s_j\}) \cdot P(\text{message is } s_j)
\end{align*}

Let $f_S(x)$ denote the probability density $P(X=x)$ for a signal $X$ sampled from the physiological distribution $\mathcal{D}_H(\cdot \mid S)$. Since both the true message and all decoys are independently sampled from the same distribution over $S$, we have:
\begin{align*}
    P(M=s_j \mid c) &\propto \left( \prod_{i \ne j} f_S(s_i) \right) \cdot f_S(s_j) \\
                   &\propto \prod_{k=1}^{p+1} f_S(s_k)
\end{align*}

The final product $\prod_{k=1}^{p+1} f_S(s_k)$ is a constant for any given ciphertext $c$, regardless of which signal $s_j$ is hypothesized as the true message. Therefore, the posterior probability $P(M=s_j \mid c)$ is identical for all $j \in \{1, \ldots, p+1\}$.

Since all signals are equally likely to be the true message, the adversary's optimal strategy is to guess uniformly at random. Their success probability is exactly $1/(p+1)$, yielding zero advantage over random guessing.
\end{proof}

\subsubsection{Bounding the Collision Probability}
\label{sec:proof_bad_case}

We now bound the probability of the pathological collision case where at least two signals in $c$ share the same frequency. This is the only scenario where information leakage is possible.

\begin{lemma}
\label{lemma:2}
    The collision probability is bounded by $\frac{p(p+1)}{2N}.$
\end{lemma}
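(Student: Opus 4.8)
The plan is to prove this as a standard birthday-style bound via a union bound over all pairs among the $p+1$ frequency components of the ciphertext. First I would fix the sampling model implied by Algorithm~\ref{alg:keygen}: the physiological range $S$ is discretized into $N$ admissible frequency values, each decoy $f_i$ is drawn independently and uniformly from these $N$ values, and the true message $m$ is a point in the same $N$-value space (as in Lemma~\ref{lem:good_case}, $m$ may itself be viewed as an independent draw from the physiological distribution, which only helps). The observed ciphertext is then determined by the tuple $(X_1, \ldots, X_{p+1}) = (m, f_1, \ldots, f_p)$.

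Next I would define the collision event $E$ as ``$X_a = X_b$ for some $1 \le a < b \le p+1$''. Note that $E$ is, conservatively, the event that \emph{any} two of the $p+1$ components coincide, which dominates the genuinely information-leaking case of a true-signal/decoy collision, so bounding $P(E)$ suffices. Writing $E = \bigcup_{a<b} E_{ab}$ with $E_{ab} = \{X_a = X_b\}$ and applying the union bound gives $P(E) \le \sum_{a<b} P(E_{ab})$, where the sum has exactly $\binom{p+1}{2} = \tfrac{p(p+1)}{2}$ terms.

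The core computation is the per-pair probability. Since $X_a$ and $X_b$ are independent and uniform over the $N$ values, $P(E_{ab}) = \sum_{x=1}^{N} P(X_a = x)\,P(X_b = x) = \sum_{x=1}^{N} \tfrac{1}{N^2} = \tfrac{1}{N}$. Substituting into the union bound yields $P(E) \le \tfrac{p(p+1)}{2}\cdot\tfrac{1}{N} = \tfrac{p(p+1)}{2N}$, exactly the claimed bound; plugging in a small $p$ and $N$ on the order of thousands recovers the $\delta \approx 10^{-4}$ figure quoted in the text, and this is what feeds into the final security theorem as $P(\mathcal{C}_{\text{bad}})$.

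I expect the only real subtlety, rather than a genuine obstacle, to be pinning down the sampling model so that the per-pair probability equals $1/N$ rather than something larger: for a non-uniform pmf $\{q_x\}_{x=1}^{N}$ a pair collides with probability $\sum_x q_x^2$, which is minimized at the uniform distribution, so one must either invoke the uniform-sampling design choice of $\Gen$ or, more conservatively, replace $1/N$ by $\max_x q_x$ and set $N := 1/\max_x q_x$. I would also remark that the union bound is loose relative to the exact collision probability $1 - \prod_{i=0}^{p}(1 - i/N)$, but it is entirely adequate in the operating regime $p(p+1) \ll N$; since Lemma~\ref{lem:good_case} already gives zero adversary advantage on $\mathcal{C}_{\text{good}}$, no sharper estimate is needed.
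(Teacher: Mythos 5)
Your proof takes essentially the same route as the paper: a union bound over the $\binom{p+1}{2}$ pairs of independently sampled signals, with per-pair collision probability $1/N$ over the $N$-point discretized range $S$, yielding $\delta \le \frac{p(p+1)}{2N}$. Your caveat about non-uniform sampling is in fact sharper than the paper's wording — since the per-pair collision probability $\sum_x q_x^2$ is \emph{minimized} at the uniform distribution, the paper's phrase ``in the worst case, signals are uniformly distributed'' is only justified by the design choice that $\Gen$ samples (approximately) uniformly, or by replacing $1/N$ with $\max_x q_x$ as you suggest.
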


\begin{proof}
    The $p+1$ signals (one true signal and $p$ decoys) are sampled independently from a discrete frequency space $S$ with $|S| \le N$ possible values. We denote by $\delta$ the probability that any collision occurs. Using a union bound over all $\binom{p+1}{2}$ pairs of signals:
\begin{equation}
\delta = P(\mathcal{C}_{\text{bad}}) \le \binom{p+1}{2} \cdot P(\text{two signals collide})
\end{equation}

In the worst case, signals are uniformly distributed over $S$, giving $P(\text{two signals collide}) = 1/|S|$. When $|S| = N$:
\begin{equation}
\delta \le \frac{p(p+1)}{2N}
\end{equation}
\end{proof}
For any reasonably large frequency space (e.g., $N=2^{16}$ representing heart rates at 0.12 bpm resolution over the [45, 180] bpm typical range) and practical number of decoys (e.g., $p=3$), this collision probability is negligible: $\delta \approx 1.8 \times 10^{-4}$.

\subsubsection{Main Privacy Theorem}
\label{sec:proof_main}

We now combine the analyses of both cases to establish \name{}'s overall privacy guarantee.

\begin{theorem}[\name{} Privacy Guarantee]
The \name{} obfuscation scheme provides strong privacy: the adversary's advantage $\epsilon_{\text{adv}}$ is bounded by the negligible collision probability $\delta$:
\begin{equation}
\epsilon_{\text{adv}} \le \delta \le \frac{p(p+1)}{2N}
\end{equation}
\end{theorem}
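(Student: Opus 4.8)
The plan is to decompose the adversary's success probability by conditioning on whether the observed ciphertext falls in $\mathcal{C}_{\text{good}}$ or $\mathcal{C}_{\text{bad}}$, and then apply the two lemmas already established. Concretely, for the optimal adversary $\mathcal{A}$ we write
\begin{align*}
    P(\mathcal{A}\text{ correct}) &= P(\mathcal{A}\text{ correct} \mid c \in \mathcal{C}_{\text{good}}) \cdot P(c \in \mathcal{C}_{\text{good}}) \\
    &\quad + P(\mathcal{A}\text{ correct} \mid c \in \mathcal{C}_{\text{bad}}) \cdot P(c \in \mathcal{C}_{\text{bad}}).
\end{align*}
By Lemma~\ref{lem:good_case}, the first conditional probability is exactly $\tfrac{1}{p+1}$, and by Lemma~\ref{lemma:2} we have $P(c \in \mathcal{C}_{\text{bad}}) = \delta \le \tfrac{p(p+1)}{2N}$, so $P(c \in \mathcal{C}_{\text{good}}) = 1-\delta$.

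The second step is to bound the bad-case conditional probability trivially by $1$: even if a collision leaks complete information and the adversary guesses correctly with certainty, this contributes at most $\delta$. Substituting gives
\begin{align*}
    P(\mathcal{A}\text{ correct}) &\le \frac{1}{p+1}(1-\delta) + 1 \cdot \delta = \frac{1}{p+1} + \delta\left(1 - \frac{1}{p+1}\right) \le \frac{1}{p+1} + \delta.
\end{align*}
Rearranging using the definition $\epsilon_{\text{adv}} = P(\mathcal{A}\text{ correct}) - \tfrac{1}{p+1}$ yields $\epsilon_{\text{adv}} \le \delta \le \tfrac{p(p+1)}{2N}$, which is the claimed bound. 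I would then close by noting that for the sample parameters ($N = 2^{16}$, $p=3$) this is $\approx 1.8\times 10^{-4}$, hence negligible, and observe that the bound holds against \emph{any} probabilistic polynomial-time adversary since Lemma~\ref{lem:good_case} already covers the Bayes-optimal strategy and no computational assumption was used — the guarantee is in fact information-theoretic.

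I do not expect a serious obstacle here: the theorem is essentially a one-line law-of-total-probability argument once the two lemmas are in hand, and the only thing to be slightly careful about is the direction of the inequality when replacing $1-\tfrac{1}{p+1}$ by $1$ (valid since $\delta \ge 0$). The one place worth a sentence of justification is \emph{why} it suffices to bound the bad-case probability by $1$ rather than analyzing it: the point is that $\delta$ is already negligible, so any leakage it permits is absorbed into an already-negligible term, and a finer analysis of partial information leakage in the colliding case would only tighten a bound that is not the binding constraint. If one wanted to be more precise, one could note that even in $\mathcal{C}_{\text{bad}}$ the adversary cannot do better than $\tfrac{2}{p+1}$ (the repeated value is real with probability at most the posterior over the two colliding slots), but this refinement is unnecessary for the stated result.
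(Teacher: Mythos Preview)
Your proposal is correct and follows essentially the same law-of-total-probability argument as the paper, invoking Lemma~\ref{lem:good_case} for the good case and Lemma~\ref{lemma:2} to bound $\delta$, with the trivial bound of $1$ on the bad-case conditional. The only cosmetic difference is that you decompose $P(\mathcal{A}\text{ correct})$ and then subtract $\tfrac{1}{p+1}$, whereas the paper decomposes the advantage $\epsilon_{\text{adv}}$ directly; your version is arguably cleaner since ``conditional advantage'' is slightly informal, but the substance is identical.
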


\begin{proof}
We express the adversary's total advantage using the law of total probability, partitioning over whether the ciphertext is good or bad:
\begin{equation}
    \label{eq:epsadv}
    \epsilon_{\text{adv}} = P(\text{Adv} \mid \mathcal{C}_{\text{good}})P(\mathcal{C}_{\text{good}}) + P(\text{Adv} \mid \mathcal{C}_{\text{bad}})P(\mathcal{C}_{\text{bad}})
\end{equation}

We bound each term individually. First, when the ciphertext is non-colliding ($c \in \mathcal{C}_{\text{good}}$), the adversary's advantage is zero: $P(\text{Adv} \mid \mathcal{C}_{\text{good}}) = 0$, by Lemma ~\ref{lem:good_case}

Second, in the worst-case collision scenario, a collision could theoretically reveal the message's identity perfectly. The adversary's success probability would be at most 1, giving advantage at most $1 - 1/(p+1) < 1$. We conservatively bound this term by 1: $P(\text{Adv} \mid \mathcal{C}_{\text{bad}}) \le 1$.

Third, from Lemma \ref{lemma:2}, the probability of a collision is $P(\mathcal{C}_{\text{bad}}) = \delta \le \frac{p(p+1)}{2N}$.

Substituting these bounds into Equation~\ref{eq:epsadv}:
\begin{align*}
    \epsilon_{\text{adv}} &\le (0 \cdot P(\mathcal{C}_{\text{good}})) + (1 \cdot P(\mathcal{C}_{\text{bad}})) \\
    &\le P(\mathcal{C}_{\text{bad}}) \\
    &\le \delta \le \frac{p(p+1)}{2N}
\end{align*}

Since $\delta$ is negligible for large $N$, the adversary's advantage $\epsilon_{\text{adv}}$ is also negligible, proving that \name{} is a strong obfuscation scheme.
\end{proof}

\subsection{\name{} Implementation}\label{sec:implementation}

We implement \name{} using a pneumatic-based actuation system. The system consists of two main components: (1) a control unit (Arduino Uno) that generates the decoy electrical pulse signal based on the cryptographic key $k$, and (2) a pneumatic actuation device that converts the electrical pulse signal into a physical motion. 

\subsubsection{Decoy Signal Generation}
\label{sec:signal_gen}

A core requirement of our \name{} scheme is generating physical decoy signals (periodic signals of all frequencies $f_i$ of the key $k=\{f_1,..f_p\}$) that obfuscate the user's true heart-rate frequency ($m$). We generate a decoy signal using a binary pulse train that ultimately drives our pneumatic actuator (described in the next paragraph). We choose $p=3$ number of decoy frequencies, which gives the adversary's probability: $$P(\mathcal{A}\text{'s guess is correct})=\frac{1}{3+1}=0.25.$$  The pulse signal is generated through a two-step process. First, we generate a 10-second base signal at 2000~Hz sampling rate, combining multiple sinusoids at the obfuscation frequencies (e.g., 53, 79, and 101 bpm) to create a complex composite signal, and copy 3 times to create a 30-second signal. Second, the base signal is converted to a binary pulse train using zero-crossing detection, where each positive-going zero-crossing triggers a fixed-duration pulse (25~ms). The pulse signal is used as an input to the air valve, which drives the pneumatic chamber \name{}. To visually understand the pulse signal in the frequency domain, we did a time-frequency analysis (spectrogram) with frequency resolution of 6 bpm. The spectrogram (Fig.~\ref{fig:spect_input}) shows that the generated pulse input indeed contains strong frequency components around 53, 79, and 101 bpms (labeled with dashed red lines).

\begin{figure}[htbp]
    \centering 
    
    \begin{subfigure}[b]{0.48\linewidth}
        \centering
        \includegraphics[width=\linewidth]{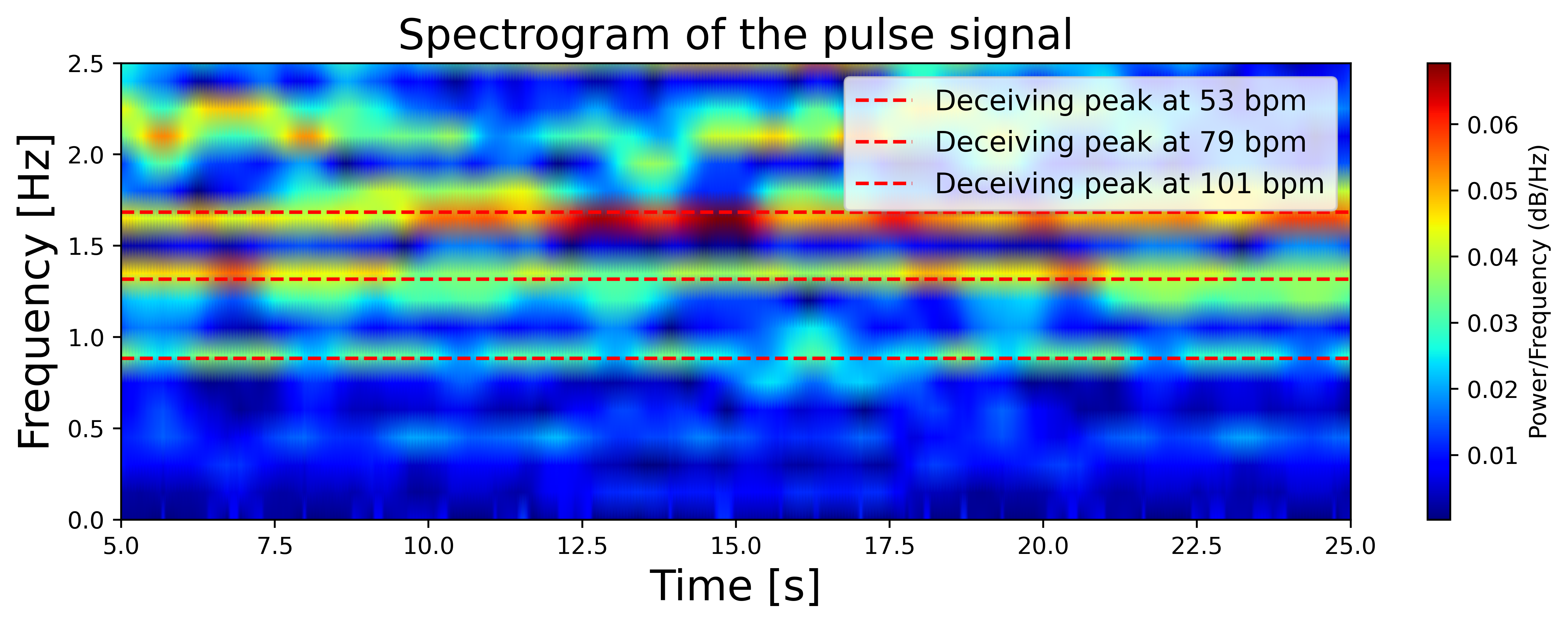}
        \caption{~}
        \label{fig:spect_input}
    \end{subfigure}
    \hfill 
    \begin{subfigure}[b]{0.48\linewidth}
        \centering
        \includegraphics[width=\linewidth]{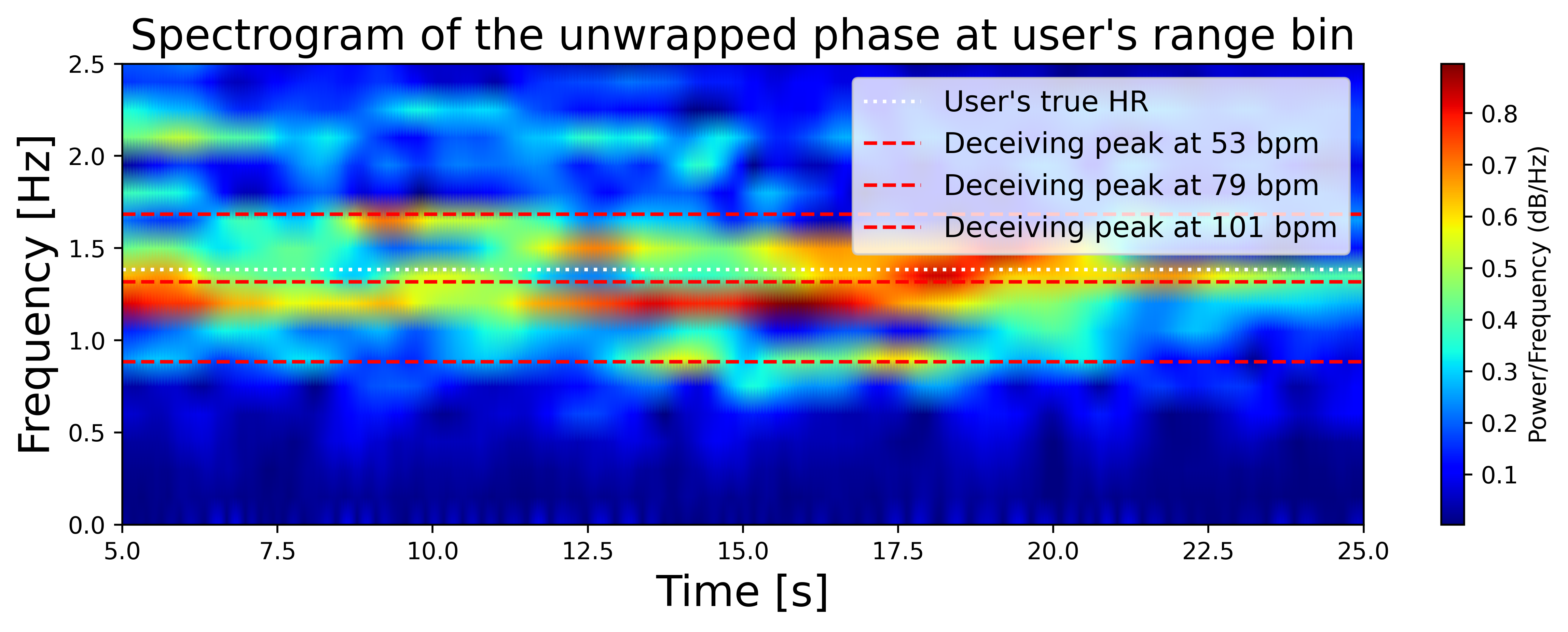}
        \caption{~ }
        \label{fig:spect_output}
    \end{subfigure}
    
    \caption{Time-frequency analysis (spectrogram) of the theoretical decoy signal and the signal collected with mmWave for a person wearing \name{}. (a) Shows high frequency content around the fake frequencies from the key $k$. (b) mmWave collected data (User 4, trial 8) shows high frequency content around the same frequencies, with some minor distortions (which are expected because of the pulse conversion step). }
    \label{fig:main_complex_layout}
\end{figure}


\subsubsection{Pneumatic Device}

The pneumatic system employs a 12 V diaphragm pump with 6 L/min flow rate as the main air input. Using an Arduino and an nMOS transistor, the pulse signal from the previous step drives a normally-open air valve, enabling rapid inflation period (25 ms) and natural deflation period, creating the periodic expansion-contraction motion that mimics physiological patterns. The Arduino is interfaced using a Macbook Pro computer and the pump is powered-up using a DC power supply. The complete hardware is shown in Figure \ref{fig:ckt}.

\begin{figure*}[htbp]
    \centering 
    \includegraphics[width=\textwidth]{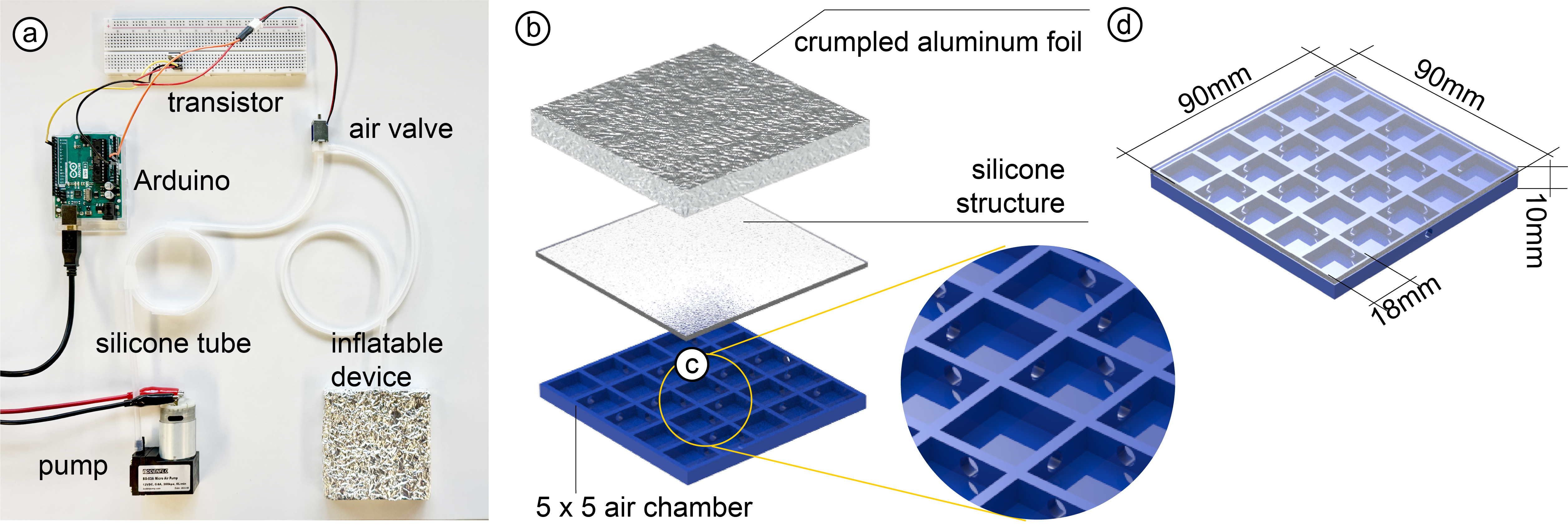}
        \caption{(a) Complete hardware for \name{} (not powered-up). (b) Inflatable pneumatic device building blocks. (c) Interconnected airways inside the inflatable device. (d) Dimensions of the inflatable device.}
        \label{fig:ckt}
\end{figure*}

The inflatable device generates obfuscation signals through controlled pneumatic actuation. The actuator transforms from a flat configuration (approximately 10mm thick) to an inflated state (30mm displacement), compact enough to fit in a chest pocket while providing sufficient radar cross-section for detection. The actuator consists of a multi-chamber silicone structure (Ecoflex 0050) with interconnected airways that enable uniform inflation (Fig. \ref{fig:ckt}c). The outer surface is coated with aluminum foil to enhance radar reflectivity, ensuring strong signal returns for both mmWave and acoustic sensing modalities (Fig. \ref{fig:ckt}b). The chamber design features a $5\times5$ grid of cells (each 18 mm$\times$18 mm) connected by 5 mm airways, allowing rapid pressure equalization while maintaining structural integrity during repeated inflation cycles (Fig. \ref{fig:ckt}d). When activated, the coordinated pump and valve operation creates periodic expansion-contraction cycles at the programmed obfuscation frequencies. The inflation phase (pump on, valve closed) lasts 25 ms while deflation (pump off, valve open) occurs within 50ms, enabling operation across the full heart rate frequency range (0.8-3.0 Hz). This pneumatic approach generates physical motion detectable by all wireless sensing modalities while maintaining a simple, reliable design suitable for extended operation.

We validate the pneumatic system using our data collection mmWave device (radar configuration detailed in the next section).  To visually understand the collected signal in the frequency domain, we again did a time-frequency analysis (spectrogram) with frequency resolution of 6 bpm. Data was collected from a user wearing \name{} at a 30 cm distance from the mmWave radar. The spectrogram (Fig.~\ref{fig:spect_output}) shows that the mmWave signal sustained the strong frequency components around 53, 79, and 101 bpms from earlier (also labeled with dashed red lines), thereby validating our implementation of \name{}.

\section{Experimental Validation}

\begin{figure}[htbp]
    \centering 
    \includegraphics[]{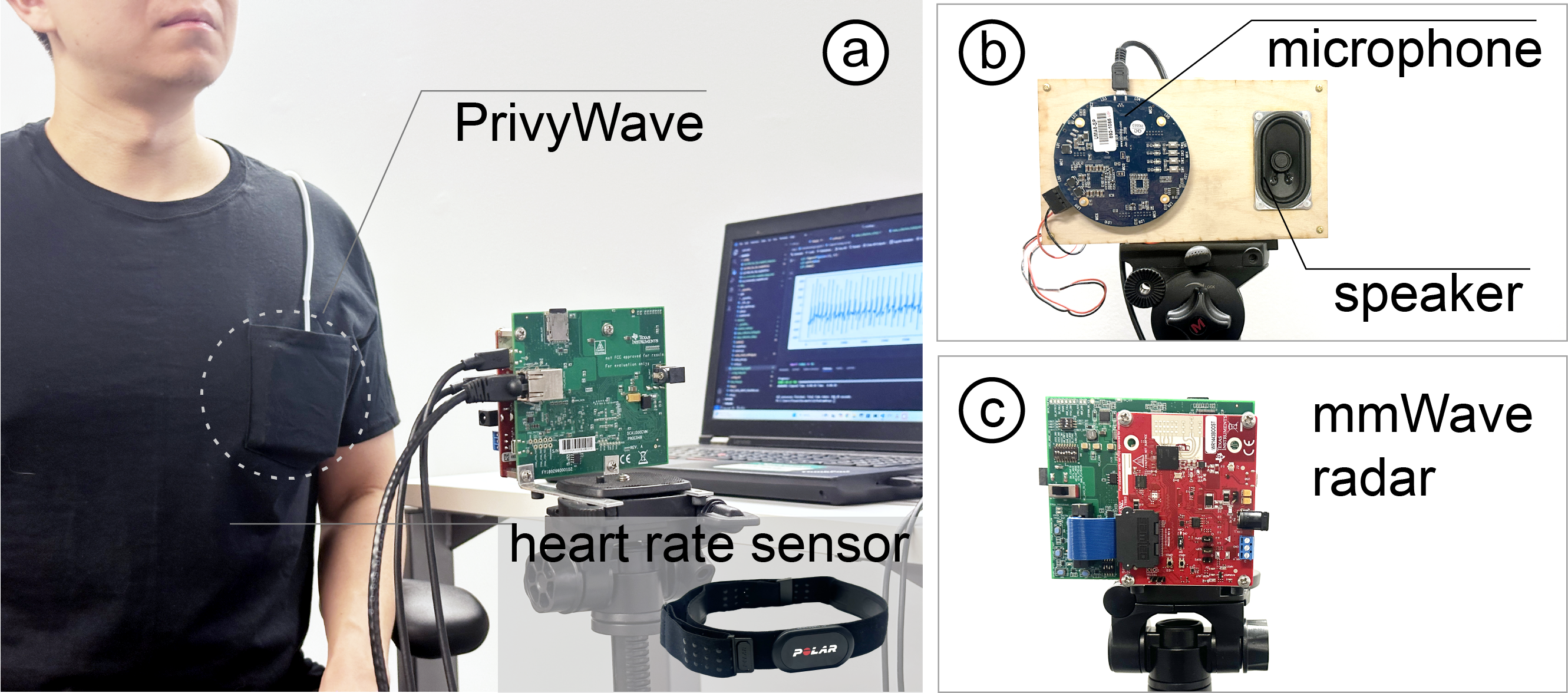}
        \caption{(a) mmWave data collection setup. (b) Acoustic sensor board with microphone array and a speaker. (c) mmWave radar sensor with data collection board.}
        \label{fig:experiment}
\end{figure}

We conduct comprehensive experiments to validate \name{}'s effectiveness across multiple scenarios: a user study with two sensing modalities (mmWave and acoustic), performance benchmarks across different environments, distances, and orientations. 

\subsection{Experiment Setup}

\begin{table}[htbp]
\centering
\caption{Comparison of Sensor System Parameters} 
\label{tab:sensor_comparison} 

\begin{subtable}{0.48\textwidth}
    \centering
    \caption{mmWave Sensor Parameters}
    \label{tab:mmwave_params} 
    \begin{tabular}{@{}lll@{}}
    \toprule
    \textbf{Parameter} & \textbf{Value} & \textbf{Unit} \\
    \midrule
    \multicolumn{3}{c}{\textit{Sensor Configuration}} \\
    TX Antennas ($N_{TX}$) & 2 &  \\
    RX Antennas ($N_{RX}$) & 4 &  \\
    Start Frequency ($f_c$) & 77 & GHz \\
    Frequency Slope ($S$) & 60.012 & MHz/$\mu$s \\
    ADC Sample Rate ($f_s$) & 5 & Msps \\
    ADC Samples ($N_{ADC}$) & 256 & samples \\
    Frame Periodicity ($T_F$) & 0.5 & ms \\
    Range FFT Size ($N_{FFT}$) & 256 & points \\
    Chirp Duration ($T_C$) & 98 & $\mu s$ \\
    \midrule
    \multicolumn{3}{c}{\textit{Calculated Performance}} \\
    Bandwidth ($B$) & 3.07 & GHz \\
    Range Resolution ($\Delta R$) & 4.88 & cm \\
    \bottomrule
    \end{tabular}
\end{subtable}
\hfill 
\begin{subtable}{0.48\textwidth}
    \centering
    \caption{Acoustic Sonar Parameters}
    \label{tab:sonar_params} 
    \begin{tabular}{@{}lll@{}}
    \toprule
    \textbf{Parameter} & \textbf{Value} & \textbf{Unit} \\
    \midrule
    \multicolumn{3}{c}{\textit{Sensor Configuration}} \\
    Transmitter (TX) & 1 & (Speaker) \\
    Receivers (RX) & 1 & (Mic.) \\
    Start Frequency ($f_{\text{start}}$) & 18 & kHz \\
    End Frequency ($f_{\text{end}}$) & 22 & kHz \\
    Sample Rate ($f_s$) & 48 & kHz \\
    ADC Samples ($N_{ADC}$) & 512 & samples \\
    Range FFT Size ($N_{FFT}$) & 512 & points \\
    Chirp Duration ($T_C$) & 10.67 & ms \\
    \midrule
    \multicolumn{3}{c}{\textit{Calculated Performance}} \\
    Bandwidth ($B$) & 4 & kHz \\
    Range Resolution ($\Delta R$) & 4.29 & cm \\
    \bottomrule
    \end{tabular}
\end{subtable}

\end{table}

Our experimental setup employs two wireless sensing systems to evaluate \name{}'s effectiveness. For mmWave radar sensing, we use the Texas Instruments IWR1443BOOST evaluation board operating at 77 GHz start frequency with 3.07 GHz bandwidth, achieving 4.88 cm range resolution (Fig. \ref{fig:experiment}c). The radar is equipped with 2 transmit and 4 receive antennas with 256-point range FFT processing. For acoustic sensing, we use a UMA-8-SP USB mic array with a speaker transmitter (Fig. \ref{fig:experiment}b), operating with 18-22 kHz FMCW chirps (4 kHz bandwidth) that achieve 4.29 cm range resolution through 512-point FFT processing. Both sensors were interfaced using a Windows 10 laptop and Python scripts. The detailed configuration of both sensors is outlined in Table~\ref{tab:sensor_comparison}. For ground truth heart rate measurement, we used a Polar H10 chest strap worn under clothing, with data captured at 130 Hz sampling rate and synchronized with the wireless sensors through python scripts.

For unauthorized detection, we implement state-of-the-art heart rate measurement algorithms for mmWave~\cite{alizadeh2019remote} and acoustic sensing~\cite{qian2018acousticcardiogram}. For authorized detection, we apply the same algorithms but first remove the known decoy frequencies using narrow band stop filters centered at $f_1, \ldots, f_p$ as specified by the key $k$. Heart rate is computed from the displacement signal using the average RR interval over the 30-second recording period:
$\text{Heart Rate (bpm)} = \frac{60}{\text{Average RR Interval}},$  and we evaluate performance using mean absolute error (MAE) between the measured heart rate and the Polar H10 ground truth.

\subsection{User Study}

\subsubsection{Participant recruitment and demographics}
We recruited 14 participants aged 22-35 years from the university campus. All participants provided informed consent, and the study was approved by our institutional review board (Protocol \#IRB0148510). One participant's data was excluded due to ground truth sensor disconnection, resulting in 13 complete datasets for analysis.

\subsubsection{Experimental Protocol}

We conducted the user study in an open laboratory environment. After a brief explanation of the study and demographic data collection, each participant was asked to wear the Polar H10 strap under their clothes for ground truth heart rate measurement and place the \name{} device in their chest pocket. For all experiments, we positioned a sensor (mmWave or acoustic) approximately 30 cm in front of the participant. We initialized \name{} with $p=3$ decoy frequencies. Two distinct key sets were generated and used across all experiments. For each sensing modality, we collected three 30-second recordings with each of the two key sets, totaling six recordings per modality. Participants rested for 10-15 seconds between recordings, resulting in 12 total recordings per participant across both modalities. 

The same recorded data was then processed under two assumptions: the \textit{authorized} case where the sensor possesses the key $k$ and filters out known decoy frequencies before heart rate estimation, and the \textit{unauthorized} case where the sensor processes the signal without the key. This controlled comparison, where both measurements are derived from identical sensor observations with access to the cryptographic key being the only variable, directly demonstrates \name{}'s selective protection capability.

\subsubsection{Results}

\begin{figure*}
    \centering
    \includegraphics[width=\textwidth]{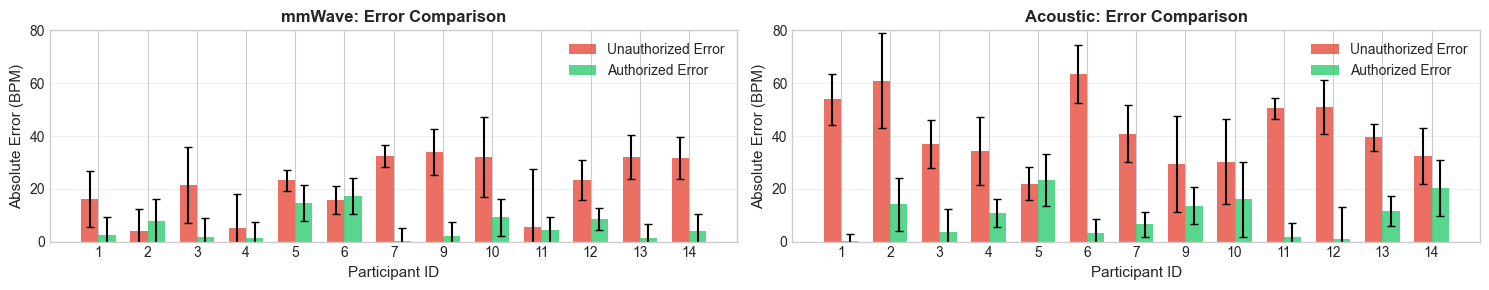}
    \caption{Individual participant heart rate errors for mmWave (left) and Acoustic (right) sensing. Red bars show the high absolute error for unauthorized devices, which consistently selected decoy frequencies. Green bars show the significantly lower error for authorized devices after decryption. Error bars represent the standard deviation across recordings.}
    \label{fig:participant_errors}
\end{figure*}

\begin{table}
\centering
\caption{Heart Rate Detection Accuracy Summary}
\label{tab:accuracy_summary}
\begin{tabular}{lcccc}
\toprule
\multirow{2}{*}{Metric} & \multicolumn{2}{c}{mmWave} & \multicolumn{2}{c}{Acoustic} \\
\cmidrule(lr){2-3} \cmidrule(lr){4-5}
& Unauthorized & Authorized & Unauthorized & Authorized \\
\midrule
MAE (BPM) & 21.3 ± 10.7 & 5.8 ± 5.2 & 42.0 ± 12.4 & 9.7 ± 7.3 \\
Median Error & 23.2 & 4.1 & 40.8 & 11.6 \\
Error Range & [4.0, 33.8] & [0.2, 17.2] & [22.0, 63.4] & [0.3, 23.3] \\
\midrule
Protection Ratio & \multicolumn{2}{c}{3.67× ($p<.001$)} & \multicolumn{2}{c}{4.33× ($p<.001$)} \\
\bottomrule
\end{tabular}
\end{table}

Figure~\ref{fig:participant_errors} illustrates the per-participant heart rate errors for both mmWave (left) and acoustic (right) sensing. The mean and median bpm errors are reported for both unauthorized and authorized devices of all users.  The unauthorized errors (red bars) are consistently high, indicating successful obfuscation, while the authorized errors (green bars) remain low, demonstrating successful signal recovery for both sensing modality. A paired-samples t-test (N=13) was conducted with these error values, and the results showed a that the authorized MAE is significantly lower than the unauthorized MAE ($p<.001$) . 
Table~\ref{tab:accuracy_summary} provides the summarized statistics for these findings. We analyze the performance for each modality:

\noindent
\textbf{For mmWave Sensing:} The unauthorized devices were effectively deceived. These sensors exhibited a high MAE of 21.3 $\pm$ 10.7 BPM, which confirms they consistently influenced by the decoy frequencies instead of the true heart rate. In contrast, the authorized device, which used the cryptographic key to filter out the decoy signals, achieved a low MAE of 5.8 $\pm$ 5.2 BPM. To put these errors into context, the reported heart-rate error of \cite{alizadeh2019remote} is $20\%$, while our errors converted into a percentage is $7.5\%$ for the authorized and $26\%$ for the unauthorized cases. So for mmWave, we are able to successfully preserve the utility of wireless heart-rate monitoring while providing obfuscation guaratee.

\noindent
\textbf{For Acoustic Sensing:} The protection was even more robust, with unauthorized devices showing a very high MAE of 42.0 $\pm$ 12.4 BPM. The authorized device's performance was slightly degraded compared to mmWave, with an MAE of 9.7 $\pm$ 7.3 BPM. The reported maximum heart-rate error of \cite{qian2018acousticcardiogram} is 3 BPM. This reduced accuracy for acoustic sensing (affecting both authorized and unauthorized measurements, as seen in the high error bars in Figure~\ref{fig:participant_errors}) is likely due to the physical properties of the modality. In our study, the users were just asked to sit in front of the system, but they did natural small movements while data collection, which also contributes to error while the reported method was more restricted during data collection. Acoustic signals have lower penetration through clothing and are more susceptible to environmental noise and multipath interference, resulting in a lower signal-to-noise ratio (SNR) compared to mmWave at the same range.

\noindent
\textbf{Protection Effectiveness:} Despite the different baseline accuracies, \name{}'s effectiveness is confirmed across both modalities. The protection ratio, defined as the unauthorized MAE divided by the authorized MAE, was 3.67$\times$ for mmWave and 4.33$\times$ for acoustic sensing. This large and statistically significant difference (mmWave: $p = 0.0011$; acoustic: $p < 0.001$) validates that our physical-layer obfuscation approach is both highly effective and modality-agnostic.

\subsection{Performance Benchmarks}

To characterize \name{}'s robustness under varying physical conditions, we conducted systematic performance benchmarks using  mmWave radar. We chose mmWave over acoustic sensing for these benchmarks because it demonstrated higher detection accuracy in our user study (5.8 BPM vs 9.7 BPM authorized error). Higher accuracy sensors present greater privacy risks to users, as they can more reliably extract heart rate. Therefore, mmWave represents the most challenging and critical case for validating \name{}'s protection effectiveness.

\subsubsection{Effect of Environment}

\begin{table}[t]
\centering
\caption{Performance Benchmark Across Different Environments (mmWave Radar)}
\label{tab:environment_test}
\begin{tabular}{@{}lccc@{}}
\toprule
\textbf{Environment} & \textbf{True HR} & \textbf{Unauth} & \textbf{Auth} \\
 & \textbf{(BPM)} & \textbf{MAE} & \textbf{MAE} \\
\midrule
Lab (Open Space) & 66.0 ± 3.0 & 6.0 ± 12.5 & 0.3 ± 1.5 \\
Kitchen & 57.2 ± 1.3 & 0.8 ± 4.5 & 1.0 ± 1.3 \\
Office & 60.2 ± 1.6 & 9.4 ± 6.7 & 3.6 ± 1.9 \\
\bottomrule
\end{tabular}
\end{table}



We evaluated \name{}'s robustness across three diverse indoor environments using mmWave radar: an open laboratory space, a kitchen, and an office. We followed the same experimental procedure as the user study, with one participant collecting six 30-second recordings (three with each key set) in each environment. Table~\ref{tab:environment_test} presents the heart rate detection performance in each environment.

\name{} demonstrates effective protection in the lab and office environments. In the lab, unauthorized sensors show 6.0 BPM error while authorized sensors achieve 0.3 BPM accuracy. In the office, unauthorized error reaches 9.4 BPM compared to 3.6 BPM for authorized sensors, maintaining clear performance separation.

However, in the kitchen environment, protection is less effective: unauthorized sensors achieve 0.8 BPM error, comparable to authorized sensors' 1.0 BPM error. The cause of this reduced effectiveness in this specific environment demonstrates that certain deployment scenarios may challenge the system's obfuscation capability.

\begin{table*}[t]
\centering
\caption{Performance Benchmarks Across Distance and Viewing Angle (mmWave Radar)}
\label{tab:benchmarks}
\begin{subtable}[t]{0.48\textwidth}
    \centering
    \caption{Range Benchmark}
    \label{tab:range_test}
    \begin{tabular}{@{}lccc@{}}
    \toprule
    \textbf{Distance} & \textbf{True HR} & \textbf{Unauth} & \textbf{Auth} \\
     & \textbf{(BPM)} & \textbf{MAE} & \textbf{MAE} \\
    \midrule
    30 cm & 66.3 ± 0.6 & 11.3 ± 9.9 & 1.3 ± 1.4 \\
    60 cm & 67.3 ± 2.5 & 16.3 ± 4.2 & 0.7 ± 4.6 \\
    90 cm & 66.0 ± 3.0 & 6.0 ± 12.5 & 0.3 ± 1.5 \\
    120 cm & 67.0 ± 5.0 & 7.7 ± 8.1 & 0.7 ± 2.5 \\
    150 cm & 69.0 ± 6.2 & 4.3 ± 20.0 & 2.7 ± 2.9 \\
    \bottomrule
    \end{tabular}
\end{subtable}
\hfill
\begin{subtable}[t]{0.48\textwidth}
    \centering
    \caption{Directional Benchmark}
    \label{tab:directional_test}
    \begin{tabular}{@{}lccc@{}}
    \toprule
    \textbf{Angle} & \textbf{True HR} & \textbf{Unauth} & \textbf{Auth} \\
     & \textbf{(BPM)} & \textbf{MAE} & \textbf{MAE} \\
    \midrule
    $-60^{\circ}$ & 59.7 ± 3.1 & 11.7 ± 2.8 & 8.8 ± 2.1 \\
    $-30^{\circ}$ & 61.3 ± 0.6 & 13.3 ± 2.0 & 4.0 ± 2.3 \\
    $0^{\circ}$ & 61.7 ± 2.5 & 10.7 ± 2.3 & 0.3 ± 2.1 \\
    $+30^{\circ}$ & 60.0 ± 4.4 & 3.3 ± 5.8 & 5.0 ± 2.0 \\
    $+60^{\circ}$ & 61.0 ± 1.0 & 7.7 ± 13.6 & 5.0 ± 3.0 \\
    \bottomrule
    \end{tabular}
\end{subtable}
\end{table*}

\subsubsection{Effect of Range}

We evaluated \name{}'s robustness across varying distances using mmWave radar. We followed the same experimental procedure as the user study with one participant, with the only variable being the sensor-to-participant distance. The participant was positioned at distances ranging from 30 cm to 150 cm at 30 cm intervals, and three 30-second recordings were collected at each distance. Table~\ref{tab:range_test} shows the heart rate detection errors for both unauthorized and authorized sensors at each distance.



\name{} maintains effective protection across all tested distances (Table~\ref{tab:range_test}). The system performs best at mid-range distances (60-90 cm), where unauthorized sensors show 6-16 BPM errors while authorized sensors maintain sub-1 BPM accuracy. At 60 cm—the optimal sensing range balancing signal strength and coverage—authorized error is only 0.7 BPM compared to 16.3 BPM for unauthorized sensors. At closer ranges (30 cm), near-field effects slightly degrade performance, while at extended distances (100 cm), both device experience signal attenuation, though authorized sensors (2.7 BPM) still significantly outperform unauthorized sensors (4.3 BPM).

\subsubsection{Effect of Orientation}

We evaluated \name{}'s performance across different orientations relative to the mmWave radar's line of sight. We followed the same experimental procedure as the user study with one participant, with the only variable being the participant's orientation relative to the radar. The participant was positioned at five angles spanning the radar's 120$^{\circ}$ field of view at 30$^{\circ}$ intervals: -60$^{\circ}$, -30$^{\circ}$, 0$^{\circ}$, +30$^{\circ}$, and +60$^{\circ}$, with three 30-second recordings collected at each angle. Table~\ref{tab:directional_test} presents the heart rate detection errors at each angle.


\name{} maintains protection across the radar's $120^{\circ}$ field of view. The system performs best at center position ($0^{\circ}$), where authorized sensors achieve 0.3 BPM error while unauthorized sensors show 10.7 BPM error. At off-center angles ($\pm 30^{\circ}$ to $\pm 60^{\circ}$), signal quality degrades due to reduced radar cross-section. Notably, at $+30^{\circ}$, the unauthorized sensor achieves a low mean error of 3.3 BPM, but with high variability (std: 5.8 BPM), while the authorized sensor shows 5.0 BPM error with stable performance (std: 2.0 BPM). At other angles, authorized sensors consistently outperform unauthorized sensors (4.0-8.8 BPM vs 7.7-13.3 BPM). This demonstrates \name{}'s effectiveness across varying user orientations, with authorized sensors providing reliable measurements even when unauthorized sensors occasionally achieve low errors through chance alignment with decoy frequencies.

\section{Discussion and Future Work}

This work demonstrates that key-based obfuscation can effectively balance the utility of ubiquitous sensing with privacy protection. The key contribution is enabling selective privacy protection of wireless sensing: authorized devices can accurately monitor heart rate while unauthorized devices are effectively prevented from extracting meaningful information. As wireless sensing proliferates in everyday environments such as offices, public transit, cafes, proactive privacy mechanisms become essential. This work establishes the technical feasibility of key-based physical obfuscation and could potentially lead to future discussion on ethics, policy, and governance of pervasive sensing technologies which we will explore in future work. Below, we discuss a few limitations of the current work and outline directions for future research.

\subsection{Motion Scenarios}

Our evaluation focused on scenarios where participants remained stationary during measurements. This design choice reflects both the technical state of wireless heartbeat sensing and the most critical privacy threats in everyday life. Existing sensing algorithms achieve highest accuracy when subjects are stationary, as motion artifacts introduce significant noise that degrades detection \cite{parralejo2025challenges}. Also, these stationary conditions align with privacy-sensitive environments where individuals are most vulnerable to unauthorized monitoring: public transportation (buses, trains, airplanes), workplaces (offices, meeting rooms), public spaces (waiting rooms, restaurants, cafes, bars). In these settings, people remain relatively stationary for extended periods, often unaware of their surroundings, while potential adversaries have stable sensing conditions and ample time to collect high-quality physiological data. For instance, a malicious actor in a coffee shop could continuously monitor customers' heart rates, or an unauthorized device in a shared office could track colleagues' stress levels throughout the workday. By demonstrating effective protection in stationary settings, we address the harder and more prevalent threat case. In the future we will verify the effectiveness of these algorithms when the subject is moving. 


\subsection{Active Side-Channel Attacks}

Our threat model assumes passive adversaries who observe wireless signals without actively probing the system. However, \name{} remains vulnerable to active side-channel attacks \cite{standaert2009introduction} where an adversary could attempt to distinguish real vital signs from decoys through active stimulus-response probing. For example, an attacker might induce a sudden startle response and observe which signal components react, since the user's physiological response would change while mechanical decoys would not. Defending against such active attacks represents a fundamental challenge that requires different approaches beyond physical obfuscation, such as detecting and responding to active probing attempts. Exploring defenses against active attacks is an potential direction for future work.

\subsection{Wearability and Form Factor}

Our current prototype requires connection to an external power source due to the pneumatic actuators' power demands, limiting mobility to stationary monitoring scenarios. Future work should focus on miniaturization and power optimization to enable battery-powered operation. Beyond power considerations, seamless integration into everyday clothing would significantly improve usability. Recent advances in smart textiles provide promising directions \cite{honnet2025fibercircuits}: actuators and circuits could be embedded within fabric layers near the chest area, or shirt buttons could be redesigned as miniature pneumatic actuators. Such integration would eliminate the need for dedicated wearable devices while maintaining obfuscation effectiveness without compromising comfort or aesthetics.

\subsection{Other Vital Signs}

While this work focused on heartbeat protection, wireless sensing can detect other involuntary physiological signals with privacy implications. Breathing generates larger movements (millimeters vs. sub-millimeter) at lower frequencies (0.1-0.5 Hz), making it more detectable but potentially easier to obfuscate with larger-displacement actuators. Beyond cardiorespiratory signals, wireless sensors can detect other health related informatoin such as tremors \cite{lo2011wireless}(Parkinson's disease, 4-12 Hz). This will revealing private health conditions that could lead to discrimination. Future work could extend obfuscation techniques for breathing and other health related information leakage. 


\section{Conclusion}

We presented \name{}, a key-based physical obfuscation system for selective privacy protection in wireless heartbeat sensing. By generating controlled decoy heartbeat signals at cryptographically-determined frequencies, our system enables authorized sensors to recover accurate measurements while unauthorized sensors cannot distinguish true signals from decoys. Our evaluation across mmWave radar and acoustic sensing demonstrates effective protection (average unauthorized error: 21.3-42.0 BPM) while maintaining high authorized accuracy (5.8-9.7 BPM). The authorized accuracy is comparable to typical wireless sensing heart rate measurement systems (ranging from 3-15 BPM) \cite{qian2018acousticcardiogram, alizadeh2019remote}, meaning \name{} does not hamper the utility of wireless sensing. The system operates across multiple sensing modalities without per-modality customization and provides formal security guarantees through cryptographic key-based decoding. This work establishes physical-layer obfuscation as a viable approach for balancing privacy and utility in pervasive health monitoring, opening new directions for privacy-preserving sensing systems.

\bibliographystyle{ACM-Reference-Format}
\bibliography{references}


\end{document}